\title{Shared Information -- New Insights and Problems in Decomposing Information in Complex Systems}
\author{Nils Bertschinger \and Johannes Rauh \and Eckehard Olbrich \and Jürgen Jost}
\institute{MPI für Mathematik in den Naturwissenschaften \\
Inselstra\ss{}e 22, 04109 Leipzig \\
\email{\{bertschi,jrauh,olbrich,jost\}@mis.mpg.de}
}
\newcommand{\set}[1]{\mathcal{#1}}
\newcommand{\Pcal}{\ensuremath{\mathcal{P}}}
\newcommand{\Rb}{\ensuremath{\mathbb{R}}}
\DeclareMathOperator{\argmin}{argmin}
\newcommand{\sh}{\ensuremath{|}}
\begin{document}

\maketitle

\begin{abstract}
  How can the information that a set $\{X_{1},\dots,X_{n}\}$ of random variables contains about another random variable $S$
  be decomposed?  To what extent do different subgroups
  provide the same, i.e.~shared or redundant, information, carry unique
  information or interact for the emergence of synergistic information?
  
  Recently Williams and Beer proposed such a decomposition based on natural
  properties for shared information.
  While these properties 
  fix the structure of the decomposition, they do not uniquely specify the
  values of the different terms.  Therefore, we investigate additional
  properties such as strong symmetry and left monotonicity. We find that strong
  symmetry is incompatible with the properties proposed by Williams and
  Beer. Although left monotonicity is a very natural property for an information
  measure it is not fulfilled by any of the proposed measures.

  We also study a geometric framework for information decompositions and ask
  whether it is possible to represent shared information by a family of
  posterior distributions.

  Finally, we draw connections to the notions of shared knowledge and common
  knowledge in game theory.  While many people believe that independent
  variables cannot share information, we show that in game theory independent
  agents can have shared knowledge, but not common knowledge.
  We conclude that intuition and heuristic arguments do not suffice when arguing
  about information.
\end{abstract}

\section{Introduction}

The field of complex systems investigates systems which are composed of many components or sub-systems. Such a system is considered as complex if these components interact in intricate ways and exhibit dependencies at all scales. Informally,
complex systems are often described in terms of information that is exchanged between components. Thus, information
theory is a natural tool to study complex systems. 

As an example from neural coding, consider two neurons which provide
information about some stimulus.  Many scientists have tried to
uncover whether both neurons provide redundant information about the
stimulus or act synergetically, i.e.~provide information which can
only be recovered when the joint response of both cells is recorded
simultaneously~\cite{Schneidman2003,Latham2005}.  Similarly, one could
ask for the unique information of each response, i.e.~information that
can be obtained from one of the cells, but not the other. For example,
the brain separates visual information into the where and what
pathways~\cite{Pessoa2008} which potentially provide unique information with respect to
each other.  Another way to explain the intuition on
how information can be decomposed, is to consider two agents which are
interrogated about certain topics. For example, assume that one agent is an expert in
physics and biology, whereas the other one has studied art and
biology. In this case, both agents could answer questions about
biology being their shared topic. Furthermore, each agent has
additional unique information about physics and art,
respectively. Considering their joint responses an interrogator might
be able to draw interesting connections between art and physics
none of the agents is aware of.  This would correspond to the
synergetic information in this case.

In general, when considering more than two random variables, there may be
different combinations of shared, unique and synergistic information, depending
on how the information is distributed among the random variables.  The total
mutual information $I(S : X_{1},\dots,X_{n})$ should then be a sum of different
terms with a well-defined interpretation.  At the moment, it is not clear how
many such terms are necessary in the general case of $n$ interacting elements.
Williams and Beer recently proposed one such decomposition, which they call
\emph{partial information (PI) decomposition}~\cite{WB}.  This decomposition is
naturally derived from simple intuitive properties that such a decomposition
should satisfy.

Before explaining the construction of Williams and
Beer, we first have a look at the case of $n=2$ explanatory variables
in Section~\ref{sec:two-variables}.
In Section~\ref{sec:natur-prop-SI} we discuss natural properties that such a
decomposition should satisfy and, following Williams and Beer, use these
properties to derive the PI decomposition.  In
Section~\ref{sec:further-properties} we propose additional properties that
relate the values of shared information in situations where we ask for
information about different variables.  In Section~\ref{sec:Imin} we discuss the
measure $I_{\min}$ proposed by Williams and Beer and compare it to another
function~$I_{I}$, i.e. the minimum of the pairwise mutual informations.  We show that the function $I_{\min}$ may decrease when we ask
for information about a larger variable.
In Section~\ref{sec:three-variables}, we study the case for three variables.  We
show that it is difficult to assign intuitively plausible values to all partial
information terms, even in the simple XOR-example.  Using this example we show
that the structure of the PI lattice is incompatible with a symmetry property
which we call strong symmetry.

In Section~\ref{sec:geometry} we propose a geometric picture for information
decomposition. This view provides an appealing mathematical structure and
provides additional insights into the structure of information.  Within this geometric framework, we compare
our ideas to the measures proposed in \cite{WB} and \cite{HarderSalgePolani12:Bivariate_Redundant_Information}.
Then, in
Section~\ref{sec:game-theory}, we study the game theoretic notions of shared and
common knowledge that are used to describe epistemic states of multi-agent
systems, and we discuss how these notions are related to the problem of
decomposing information.  We conclude with an outlook on the possibility
of a general decomposition of information.

\section{The Case of Two Variables}
\label{sec:two-variables}

First, we fix the notation and recall some basic definitions from information theory~\cite{CoverThomas}. We assume that
a system consists of $N$ components $X_1, \ldots, X_N$.  For simplicity we assume that the set of possible states
$\set{X}_i$ that a component $X_{i}$ can be in is finite.  Thus, the set of all possible states for the whole system is
given by $\set{X}_1^N = \times_{i = 1}^N \set{X}_i$.

Given a probability distribution $p$ on $\set{X}_1^N$, the $X_i$ become random variables.
%
Mutual information between two random variables $X$ and $Y$ quantifies the information about $Y$ that is gained by knowing $X$ and vice versa. It can be defined as
\begin{equation} 
  I(X: Y) = \sum_{y \in \set{Y}} p(y) D(p(X|y) \| p(X))
  \label{equ:KL}
\end{equation}
where $D(p(X|y) \| p(X)) = \sum_{x \in \set{X}} p(x|y) \log_2 \frac{p(x|y)}{p(x)}$ is the Kullback-Leibler (KL)
divergence between $p(X|y)$ and $p(X)$\footnote{Here, $p(X)$ denotes the probability distribution of the random variable
  $X$. When referring to the probability of a particular outcome $x \in \set{X}$ of this random variable, we write
  $p(x)$.}. The KL divergence is often considered as a distance between probability distributions even though it is not
a metric. But, like a metric, it vanishes if and only if the two distributions are identical. It can also be interpreted
as an information gain: if one finds out that $Y=y$ then $D(p(x|y)\|p(x))$ bits of information are gained about $X$.
It is well known that the mutual information is symmetric and vanishes if and only if $X$ and $Y$ are independent.

Consider now three random variables $X_1, X_2$ and~$S$.  The (total) mutual information $I(S; (X_1, X_2))$ quantifies
the total information that is gained about $S$ if the outcome of $X_1$ and $X_2$ is known.  How do $X_1$ and $X_2$ contribute to this information?

For two explanatory variables, we expect four contributions to $I(S : X_{1}X_{2})$:
\begin{equation} 
 I(S: X_1 X_2) = SI(S: X_1; X_2) + UI(S: X_1 \setminus X_2) + UI(S: X_2 \setminus X_1) + CI(S: X_1; X_2)
  \label{equ:decomp}
\end{equation}
The shared (redundant) information $SI(S: X_1; X_2)$, the unique
informations $UI$ and the complementary (synergistic) information
$CI(S: X_1;X_2)$.
Intuition tells us that the individual
mutual informations that are provided by $X_1$ and $X_2$ should decompose as
\begin{equation}
   \label{equ:decomp2}
  \begin{aligned}
 I(S: X_1)  &= SI(S: X_1; X_2) + UI(S: X_1 \setminus X_2) \\
    I(S: X_2) & = SI(S: X_1; X_2) + UI(S: X_2 \setminus X_1) \;.
  \end{aligned}
\end{equation}
Using the full decomposition \eqref{equ:decomp} and the chain rule of
mutual information~\cite{CoverThomas} we find that the conditional
informations correspond to unique and complementary information,
e.g.~$I(S:X_1|X_2) = UI(S:X_1 \setminus X_2) + CI(S:X_1;X_2)$.  Furthermore, we
recover the fact that the co-information $I_{Co}$~\cite{Bell2003} contemplates
shared and complementary information, i.e.
\begin{equation}
  \label{equ:coinformation-diff}
  I_{Co}(S:X_1: X_2) := I(S: X_1|X_2) - I(S:X_1) = CI(S: X_1;X_2) - SI(S: X_1;X_2)
\end{equation}

Unfortunately, the three linear equations~\eqref{equ:decomp}
and~\eqref{equ:decomp2} do not completely specify the four functions
on the right hand side of~\eqref{equ:decomp}.
To determine the decomposition~\eqref{equ:decomp} it is sufficient to
define one of the functions $SI$, $UI$ and $CI$.  It seems to be a
difficult task to come up with a reasonable and well-motivated
definition of $SI$ such that the induced definitions of $UI$ and $CI$
via equations~\eqref{equ:decomp} and~\eqref{equ:decomp2} are
non-negative.  The same is true when trying to find formulas for $UI$
or $CI$.  Note that any definition of the unique information fixes two
of the terms in~\eqref{equ:decomp}.  This leads to the consistency
condition
\begin{equation}
  \label{equ:UI-consistency}
  I(S:X_1) + UI(S:X_2\setminus X_1) = I(S:X_2) + UI(S:X_1\setminus X_2),
\end{equation}
which resembles the chain rule. Indeed, $UI(S:X_1\setminus X_2)$ can be
considered as a version of conditional information which does not contain the
complementary information\footnote{A related notion has been developed in the
  context of cryptography to quantify the secret information. Although the
  secret information has a clear operational interpretation it cannot be
  computed directly, but is upper bounded by the \emph{intrinsic mutual
    information} $I(S: X_1 \downarrow X_2)$
  \cite{MaurerWolf97,ChristandlRennerWolf03}. Unfortunately, the intrinsic
  mutual information does not obey the consistency
  condition~\eqref{equ:UI-consistency}, and hence it cannot be interpreted as
  unique information in our sense.}.

Apart from the problem of finding formulas for $SI$, $UI$ and $CI$, a
second problem is how to generalize the decomposition~\eqref{equ:decomp} to more
than two explanatory variables.  A possible solution to
both problems was recently proposed by Williams and Beer.

\section{Natural Properties of Shared Information and the Partial Information Lattice}
\label{sec:natur-prop-SI}

Williams and Beer~\cite{WB} base their construction of a non-negative decomposition
of $I(S: X_{1}\dots X_{n})$ on the notion of redundancy or shared
information.  Let
$\mathbf{A}_{1},\dots,\mathbf{A}_{k}\subseteq\{X_{1},\dots,X_{n}\}$,
and denote by $I_{\cap}(S:\mathbf{A}_{1};\dots;\mathbf{A}_{k})$ the
information about $S$ that is shared among the random variables in the
sets $\mathbf{A}_{1},\dots,\mathbf{A}_{k}$.  
It is natural to demand that $I_{\cap}$ satisfy the following properties:
\begin{itemize}
\item[\textbf{(GP)}] $I_{\cap}(S : \mathbf{A}_{1};\dots;\mathbf{A}_{k})\ge 0$.
  \hfill\emph{(global positivity)}
\item[\textbf{(S$_{0}$)}] $I_{\cap}(S: \mathbf{A}_{1};\dots;\mathbf{A}_{k})$ is symmetric in $\mathbf{A}_{1},\dots,\mathbf{A}_{k}$.
  \hfill\emph{(weak symmetry)}
\item[\textbf{(I)}] $I_{\cap}(S: \mathbf{A}) = I(S:\mathbf{A})$ equals the mutual information of $S$ and $\mathbf{A}$.

  \hfill\mbox{\emph{(self-redundancy)}}
\item[\textbf{(M)}] $I_{\cap}(S: \mathbf{A}_{1};\dots;\mathbf{A}_{k}) \le I_{\cap}(S: \mathbf{A}_{1};\dots;\mathbf{A}_{k-1})$, with equality if $\mathbf{A}_{k-1}$ is
  a subset of $\mathbf{A}_{k}$.
  \hfill\emph{(monotonicity)}
\end{itemize}
The properties \textbf{(S$_{0}$)}, \textbf{(I)} and \textbf{(M)} have been proposed as axioms of shared information by
Williams and Beer in~\cite{WB}.  As Williams and Beer observe, \textbf{(GP)} is a consequence of the other properties.
Here we like to state it as a separate property, since we want to discuss what happens if we drop or relax some of these
properties.

The properties \textbf{(S$_{0}$)} and \textbf{(M)} imply that it is sufficient
to define the function $I_{\cap}(S:\mathbf{A}_{1};\dots;\mathbf{A}_{k})$ in
the case that $\mathbf{A}_{i}\not\subseteq \mathbf{A}_{j}$ for all $i\neq j$.  A family of sets
$\mathbf{A}_{1},\dots,\mathbf{A}_{k}$ with this property is called an \emph{anti-chain}.  The anti-chains form a
lattice with respect to the partial order defined by $(\mathbf{B}_{1},\dots,\mathbf{B}_{k}) \le (\mathbf{A}_{1},\dots,\mathbf{A}_{l})$
if and only if for each $i=1,\dots,l$ there exists $j\in\{1,\dots,k\}$ such that
$\mathbf{B}_{j}\subseteq\mathbf{A}_{i}$.
If $S$ is fixed, then \textbf{(S$_{0}$)} and \textbf{(M)} imply that $I_{\cap}(S:\cdot)$ is a monotone function on the
lattice of anti-chains of $\{X_{1},\dots,X_{n}\}$: If $(\mathbf{B}_{1},\dots,\mathbf{B}_{k}) \le
(\mathbf{A}_{1},\dots,\mathbf{A}_{l})$, then
\begin{equation*}
  I_{\cap}(S: \mathbf{B}_{1},\dots,\mathbf{B}_{k}) = I_{\cap}(S: \mathbf{B}_{1},\dots,\mathbf{B}_{k},\mathbf{A}_{1},\dots,\mathbf{A}_{k})  \le  I_{\cap}(S: \mathbf{A}_{1},\dots,\mathbf{A}_{l}).
\end{equation*}
This lattice is also called the \emph{partial information (PI) lattice}.  In this paper, we focus on the case of
two or three random variables, and the corresponding lattices are depicted in Figures~\ref{fig:2-vars}
and~\ref{fig:3-vars}.
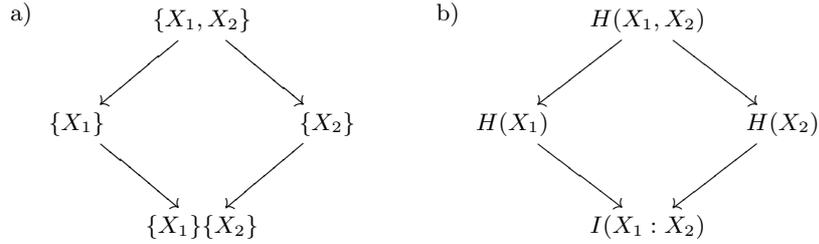
\begin{figure}
  \centering
  \begin{minipage}{0.45\linewidth}
    a)
    \xymatrix@C=1em{
      & \{X_1,X_2\} \ar[ld] \ar[rd] & \\
      \{X_1\} \ar[rd] & & \{X_2\} \ar[ld] \\
      & \{X_1\}\{X_2\} & }
  \end{minipage}
  \begin{minipage}{0.5\linewidth}
    b)
    \xymatrix@C=1em{
      & H(X_{1},X_{2}) \ar[ld] \ar[rd] & \\
      H(X_{1}) \ar[rd] & & H(X_{2}) \ar[ld] \\
      & I(X_{1}:X_{2}) &
    }
  \end{minipage}
  
  \caption{The PI lattice for two random variables. a) The sets corresponding to the nodes in the lattice. b) The
    redundancies at the nodes for $S=\{X_{1},X_{2}\}$, assuming strong symmetry (see \textbf{(S$_1$)} in Section~\ref{sec:further-properties}).}
\label{fig:2-vars}
\end{figure}

Properties \textbf{(M)} and \textbf{(I)} imply $I_{\cap}(S:\mathbf{A}_{1};\dots;\mathbf{A}_{k}) \le I_{\cap}(S:
\mathbf{A}_{1}) = I(S: \mathbf{A}_{1}) \le I(S: X_{1}\dots X_{n})$.  To obtain a decomposition of this total mutual
information, we need to associate to each element of the PI lattice a ``local quantity'' $I_{\partial}$ in such a way
that
\begin{equation*}
  I_{\cap}(S:\mathbf{A}_{1};\dots;\mathbf{A}_{k}) = \sum_{(\mathbf{B}_{1},\dots,\mathbf{B}_{l})\le(\mathbf{A}_{1},\dots,\mathbf{A}_{k})} I_{\partial}(S:\mathbf{B}_{1},\dots,\mathbf{B}_{l}).
\end{equation*}
One can show, using the notion of a Möbius inversion, that such a function $I_{\partial}$ always exists, and
$I_{\partial}$ is uniquely determined from $I_{\cap}$.

As an example consider again the case of two variables (Figure~\ref{fig:2-vars}).  When $S$ is given, then the upper
three terms in the lattice correspond to the mutual informations $I(S:X_{1})$, $I(S: X_{2})$ and $I(S: X_{1}X_{2})$.
The lowest term, $I_{\cap}(S:X_{1};X_{2})$ is the shared information $SI(S:X_{1};X_{2})$.  The PI decomposition is
\begin{align*}
  I_{\cap}(S: \{X_{1}X_{2}\}) & = I_{\partial}(S: \{X_{1}\}\{X_{2}\}) + I_{\partial}(S:\{X_{1}\}) 
  \\ & \qquad + I_{\partial}(S:\{X_{2}\}) + I_{\partial}(S:\{X_{1}X_{2}\}), \\
  I_{\cap}(S:\{X_{1}\})      & = I_{\partial}(S: \{X_{1}\}\{X_{2}\}) + I_{\partial}(S:\{X_{1}\}), \\
  I_{\cap}(S:\{X_{2}\})      & = I_{\partial}(S: \{X_{1}\}\{X_{2}\}) + I_{\partial}(S:\{X_{2}\}), \\
  I_{\cap}(S:\{X_{1}\}\{X_{2}\}) & = I_{\partial}(S: \{X_{1}\}\{X_{2}\}).
\end{align*}
A comparison with~\eqref{equ:decomp} and~\eqref{equ:decomp2} shows that
\begin{align*}
  I_{\partial}(S:\{X_{1}X_{2}\})      & = CI(S: X_{1}; X_{2}), \\
  I_{\partial}(S:\{X_{1}\})          & = UI(S: X_{1} \setminus X_{2}), \\
  I_{\partial}(S:\{X_{2}\})          & = UI(S: X_{2} \setminus X_{1}), \\
  I_{\partial}(S: \{X_{1}\}\{X_{2}\}) & = SI(S: \{X_{1}\}\{X_{2}\}).
\end{align*}

As stated above, when $I_{\cap}$ is known, then $I_{\partial}$ can be computed uniquely using a Möbius inversion.  In
general, $I_{\partial}$ may have negative values.  In order to have a natural interpretation of the PI decomposition, we
need to require:
\begin{itemize}
\item[\textbf{(LP)}] $I_{\partial}\ge 0$.
  \hfill\emph{(local positivity)}
\end{itemize}
Local positivity can also be expressed as a condition on~$I_{\cap}$, see~\cite{WB}.

\section{Further Natural Properties of Shared Information}
\label{sec:further-properties}

The properties presented in the preceding section were identified by Williams
and Beer and are naturally related to the notion of the PI lattice.
Unfortunately, they are not enough to specify the function $I_{\cap}$ uniquely.
The properties are incomplete for mainly two reaons: First,
they do not tell us much about the left hand side apart from the normalization
condition~\textbf{(I)}.  Second, they do not tell us enough about what
happens when we add another argument on the right.

In this section we propose natural properties that describe the role of the
left-hand side.  Our first proposal is the following property:
\begin{itemize}
\item[\textbf{(S$_{1}$)}] $I_{\cap}(S:\mathbf{A}_{1};\dots;\mathbf{A}_{k})$ is symmetric in $S,\mathbf{A}_{1},\dots,\mathbf{A}_{k}$.
  \hfill\emph{(strong symmetry)}
\end{itemize}
In the following, we mostly consider the case that $S=\{X_{1},\dots,X_{n}\}$, and in this case \textbf{(M)} and
\textbf{(S$_{1}$)} together imply that $I_{\cap}(S:\mathbf{A}_{1};\dots;\mathbf{A}_{k}) =
I_{\cap}(\mathbf{A}_{1}:\mathbf{A}_{2};\dots;\mathbf{A}_{k})$, and hence we may omit the first argument~$S$.

Unfortunately, strong symmetry is not satisfied by many information theoretic
quantities that are used to quantify shared information or synergy,
but nevertheless we think that it is natural: If
$I_{\cap}$ has just two arguments, then strong symmetry does hold,
since the mutual information is symmetric.  In other words, the amount
of information that one random variable $X_{1}$ contains about another
variable $X_{2}$ is the same as the amount of information that $X_{2}$
carries about $X_{1}$.  It is natural to assume that an analogous
statement should hold if $I_{\cap}$ has more than two arguments.  Note
that the co-information $I_{Co}$ is symmetric in all its arguments.

Under the strong symmetry assumption, if we consider two variables $X_{1}$ and
$X_{2}$ and set $S=\{X_{1},X_{2}\}$, then all functions are fixed.  The
corresponding lattice is depicted in Figure~\ref{fig:2-vars}b).  We will see
later that, given the other properties, strong symmetry contradicts the local
positivity in the case of three random variables $X_1,X_2,X_3$.  The
implications of this will be discussed later.

A weaker property restricting the dependence on the first argument is the following:
\begin{itemize}
\item[\textbf{(LM)}] $I_{\cap}(S: \mathbf{A}_{1}; \dots; \mathbf{A}_{k})\le I_{\cap}(S S':\mathbf{A}_{1};\dots;\mathbf{A}_{k})$.
  \hfill\emph{(left monotonicity)}
\end{itemize}
This property captures the intuition that if $\mathbf{A}_{1},\dots,\mathbf{A}_{k}$ share some information about~$S$, then at least the
same amount of information is available to reduce the uncertainty about the joint outcome of $S$ and~$S'$.
Left monotonicity follows, of course, from monotonicity and strong symmetry.

Another property, which is independent from strong symmetry and which also implies \textbf{(LM)}, is the following:
\begin{itemize}
\item[\textbf{(LC)}] 
  $I_{\cap}(S S':\mathbf{A}_{1};\dots;\mathbf{A}_{k}) = I_{\cap}(S: \mathbf{A}_{1}; \dots; \mathbf{A}_{k})
  + I_{\cap}(S': \mathbf{A}_{1}; \dots; \mathbf{A}_{k} | S)$

  \hfill\mbox{\emph{(left chain rule)}}
\end{itemize}
where $I_{\cap}(S': \mathbf{A}_{1}; \dots; \mathbf{A}_{k} | S)$ is given by
$\sum_{s \in \set{S}} p(s) I_{\cap}(S': \mathbf{A}_{1}; \dots; \mathbf{A}_{k} |
s)$, i.e.  all distributions are conditioned on $s$ and then the average is
taken to obtain a conditional information.  This property is a natural
generalization of the chain rule of mutual information.  Moreover, a similar
property is used in Shannon's axiomatic characterization of entropy.

Unfortunately, the left chain rule is not fullfilled by any of the proposed
measures for shared information that we discuss later.
Nevertheless, we state it here, since we find it mathematically appealing.  The
same is true for left monotonicity: Most measures do not satisfy~\textbf{(LM)},
see Section~\ref{sec:Imin}.

The left chain rule together with local positivity also implies the following property
which has recently been proposed by~\cite{HarderSalgePolani12:Bivariate_Redundant_Information}:
\begin{itemize}
\item[\textbf{(Id$_{2}$)}] $I_{\cap}(\mathbf{A_{1}}\cup\mathbf{A_{2}}: \mathbf{A_{1}};\mathbf{A_{2}}) = I(\mathbf{A_{1}}:\mathbf{A_{2}})$.
  \hfill\mbox{\emph{(identity)}}
\end{itemize}
The identity property implies that $I_{\cap}(\{X_{1},X_{2}\}: X_{1};X_{2})$
vanishes if $X_{1}$ and $X_{2}$ are independent.  At first sight it seems
natural that independent random variables cannot share information.  However, in
Section~\ref{sec:game-theory} we will argue that they may indeed share information in this case.




\section{The Functions $I_{\min}$ and $I_{I}$}
\label{sec:Imin}

Williams and Beer define a function $I_{\min}(S,\mathbf{A}_{1},\dots,\mathbf{A}_{k})$ which satisfies all their
properties \textbf{(GP)}, \textbf{(S$_{0}$)}, \textbf{(I)} and \textbf{(M)} as follows:
\begin{align*}
  I_{\min}(S:\mathbf{A}_{1};\dots;\mathbf{A}_{k})
  &= \sum_{s}p(s) \min_{i} \sum_{a_i}p(a_i|s)\log\frac{p(s|a_i)}{p(s)} \\
  &= \sum_{s}p(s) \min_{i} \sum_{a_i}p(a_i|s)\log\frac{p(a_i|s)}{p(a_i)} \\
  &= \sum_{s} \min_{i} \sum_{a_i}p(a_i,s)\log\frac{p(a_i,s)}{p(a_i)p(s)}.
\end{align*}
The idea is the following: For each $i$ compare the prediction $p(s|a_{i})$ of
$S$ by $\mathbf{A}_{i}$ with the prior distribution $p(s)$ of~$S$.  Then combine
a minimization over $i$ with a suitable average using the joint distribution of
$\mathbf{A}_{i}$ and~$S$.

%

The order of the minimization and the averaging plays a crucial role.  If we interchange it, we obtain another function
\begin{equation*}
  I_{I}(S: A_{1};\dots;A_{k}) = 
  \min_{i} \sum_{s}p(s) \sum_{a_i}p(a_i|s)\log\frac{p(s|a_i)}{p(s)}
  = \min_{i}\{ I(S: A_{i}) \}\,.
\end{equation*}
This function $I_{I}$ satisfies the same properties, including
local positivity~\textbf{(LP)} (the proof of~\cite{WB} that proves \textbf{(LP)}
for $I_{\min}$ applies).  Of course, $I_{I}$ does not at all capture the intuition
behind the notion of shared information: $I_{I}$ just compares absolute values
of mutual informations, without caring whether different variables contain ``the same information.''
We will later argue that $I_{\min}$ suffers from a similar flaw (in particular,
$I_{\cap}=I_{I}$ in the examples considered below).  Note that any function
$I_{\cap}$ satisfying the properties \textbf{(GP)}, \textbf{(S$_{0}$)},
\textbf{(I)} and \textbf{(M)} satisfies $I_{\cap}\le I_{I}$.  In particular,
$I_{\min}\le I_{I}$.

The function $I_{I}$ satisfies left monotonicity.
However, $I_{\min}$ does not: For example, the following joint probability distribution
\begin{center}
  \begin{tabular}{cccc|l}
    $X_1$ & $X_2$ & $S$ & $S'$ &  \\ \hline
    0  &  0  &  0  &  0   &  1/6$^{\phantom{|}}$ \\
    0  &  1  &  0  &  0   &  1/6  \\
    0  &  1  &  0  &  1   &  1/6  \\
    1  &  1  &  0  &  1   &  1/6  \\
    1  &  0  &  1  &  1   &  2/6
  \end{tabular}
\end{center}
satisfies $I_{\min}(S: X_1; X_2) = \frac13 + \frac23(\frac34\log_2 3 - 1)
> I_{\min}(SS': X_1; X_2) = \frac13$.  This example can be understood as
follows: If $S=0$, then both $X_1$ and $X_2$ have some information
about~$S$ and thus contribute $\frac34\log_2 3 - 1$ bits to $I_{\min}$ in
this case.  However, if we additionally condition on~$S'$, then in any
case one of $X_1$ or $X_2$ carries no information: To be precise, if
$(S,S')=(0,0)$, then $X_2$ is uniformly distributed, and if
$(S,S')=(0,1)$, then $X_1$ is uniformly distributed.  Thus, in both
cases the minimization contributes zero bits to $I_{\min}$.  The remaining case
$(S,S')=(1,1)$ is equivalent to the case $S=1$, where both
$X_1$ and $X_2$ are fixed, and contributes one bit with weight~$\frac13$.

Omitting the calculations we mention that the redundancy measure proposed
by~\cite{HarderSalgePolani12:Bivariate_Redundant_Information} (and denoted by
$I_{HSP}$ in Section~\ref{sec:geometry}) also violates left monotonicity in the
same example.

\section{The Case of Three Variables}
\label{sec:three-variables}
\begin{figure}
  \centering
  \begin{minipage}{0.49\linewidth}
    a)\hspace{-1.5em}
    \xymatrix@C=1.4em{
      & 123 \ar[ld]\ar[d]\ar[rd] & & \\
      12 \ar[d]\ar[rd]      & 13 \ar[ld]\ar[rd]        & 23 \ar[ld]\ar[d] & \\
      12|13 \ar[d]\ar[rrrd] & 12|23 \ar[d]\ar[rrd]     & 13|23 \ar[d]\ar[rd] & \\
      1 \ar[d]              & 2 \ar[d]                 & 3 \ar[d] & 12|13|23 \ar[llld]\ar[lld]\ar[ld] \\
      1|23 \ar[d]\ar[rd]    & 2|13 \ar[ld]\ar[rd]      & 3|12 \ar[ld]\ar[d] & \\
      1|2 \ar[rd]           & 1|3 \ar[d]               & 2|3 \ar[ld] & \\
      & 1|2|3 & & }
  \end{minipage}
  \hfill
  \begin{minipage}{0.49\linewidth}
    b)\hspace{-1.5em}
    \xymatrix@C=1em{
      & H(123) \ar[ld]\ar[d]\ar[rd] & & \\
      H(12) \ar[d]\ar[rd]      & H(13) \ar[ld]\ar[rd]        & H(23) \ar[ld]\ar[d] & \\
      I(12:13) \ar[d]\ar[rrrd] & I(12:23) \ar[d]\ar[rrd]     & I(13:23) \ar[d]\ar[rd] & \\
      H(1) \ar[d]              & H(2) \ar[d]                 & H(3) \ar[d] & \textbf{?} \ar[llld]\ar[lld]\ar[ld] \\
      I(1:23) \ar[d]\ar[rd]    & I(2:13) \ar[ld]\ar[rd]      & I(3:12) \ar[ld]\ar[d] & \\
      I(1:2) \ar[rd]           & I(1:3) \ar[d]               & I(2:3) \ar[ld] & \\
      & \textbf{?} & & }
  \end{minipage}
  
  \caption{The PI lattice for $n=3$. For simplicity the sets are abbreviated by juxtaposing the indices of the corresponding variables.  For example, $12|13$ corresponds to $\{X_{1},X_{2}\}\{X_{1},X_{3}\}$.  a) The PI lattice.  b) The redundancies at the nodes, assuming strong symmetry and $S=\{X_{1},X_{2},X_{3}\}$.}
\label{fig:3-vars}
\end{figure}
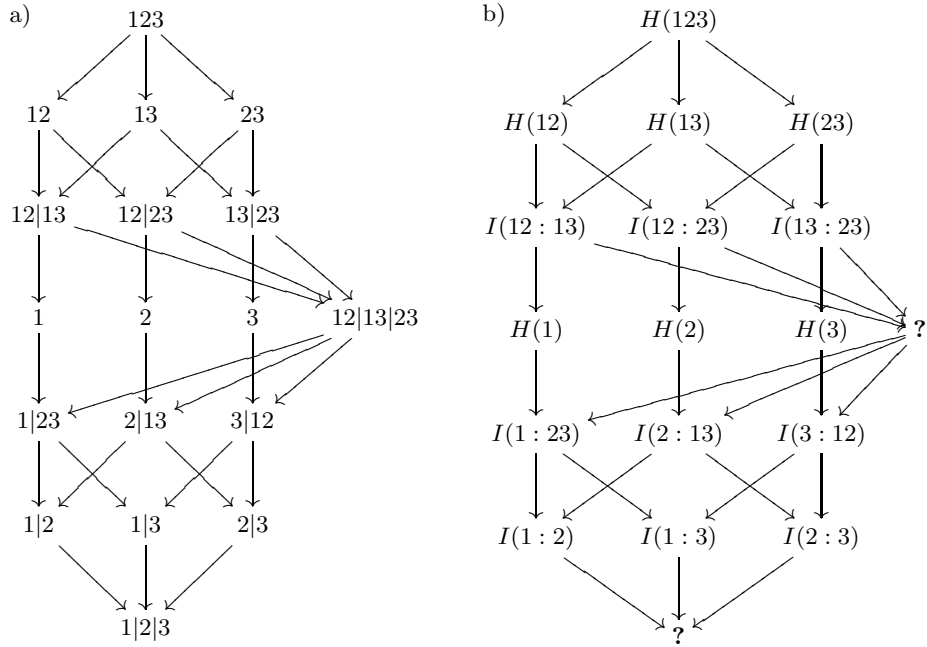

For three variables, the PI lattice is depicted in Figure~\ref{fig:3-vars}a).  Under the assumption of strong symmetry
all but two values in this lattice are fixed, see Figure~\ref{fig:3-vars}b).  The unknown values correspond to the
information shared by three random variables.

\begin{figure}
  \centering
  \begin{minipage}{0.37\linewidth}
    a)\hspace{-1.3em}
    \xymatrix@C=1em{   
      & 2(0) \ar[ld]\ar[d]\ar[rd] & & \\
      2(0) \ar[d]\ar[rd]   & 2(0) \ar[ld]\ar[rd] & 2(0) \ar[ld]\ar[d] & \\
      2(0) \ar[d]\ar[rrrd] & 2(0) \ar[d]\ar[rrd] & 2(0) \ar[d]\ar[rd] & \\
      1(0) \ar[d]          & 1(0) \ar[d]         & 1(0) \ar[d] & 2(1) \ar[llld]\ar[lld]\ar[ld] \\
      1(0) \ar[d]\ar[rd]   & 1(0) \ar[ld]\ar[rd] & 1(0) \ar[ld]\ar[d] & \\
      1(0) \ar[rd]         & 1(0) \ar[d]         & 1(0) \ar[ld] & \\
      & 1(1) & & }
  \end{minipage}
  \begin{minipage}{0.62\linewidth}
    b)\hspace{-1.5em}
    \xymatrix@C=0.8em{
      & H(123)=2 \ar[ld]\ar[d]\ar[rd] & & \\
      H(12)=2 \ar[d]\ar[rd]      & H(13)=2 \ar[ld]\ar[rd]        & H(23)=2 \ar[ld]\ar[d] & \\
      I(12:13)=2 \ar[d]\ar[rrrd] & I(12:23)=2 \ar[d]\ar[rrd]     & I(13:23)=2 \ar[d]\ar[rd] & \\
      H(1)=1 \ar[d]              & H(2)=1 \ar[d]                 & H(3)=1 \ar[d] & \textbf{?} \ar[llld]\ar[lld]\ar[ld] \\
      I(1:23)=1 \ar[d]\ar[rd]    & I(2:13)=1 \ar[ld]\ar[rd]      & I(3:12)=1 \ar[ld]\ar[d] & \\
      I(1:2)=0 \ar[rd]           & I(1:3)=0 \ar[d]               & I(2:3)=0 \ar[ld] & \\
      & 0 & & }
  \end{minipage}

  \caption{Redundancies in the XOR-example: a) $I_{\min}(123,\cdot)$ in the example.  The numbers in parentheses are
    $I_{\partial}(123,\cdot)$.  b) The shared information assuming strong symmetry.}
  \label{fig:xor}
\end{figure}
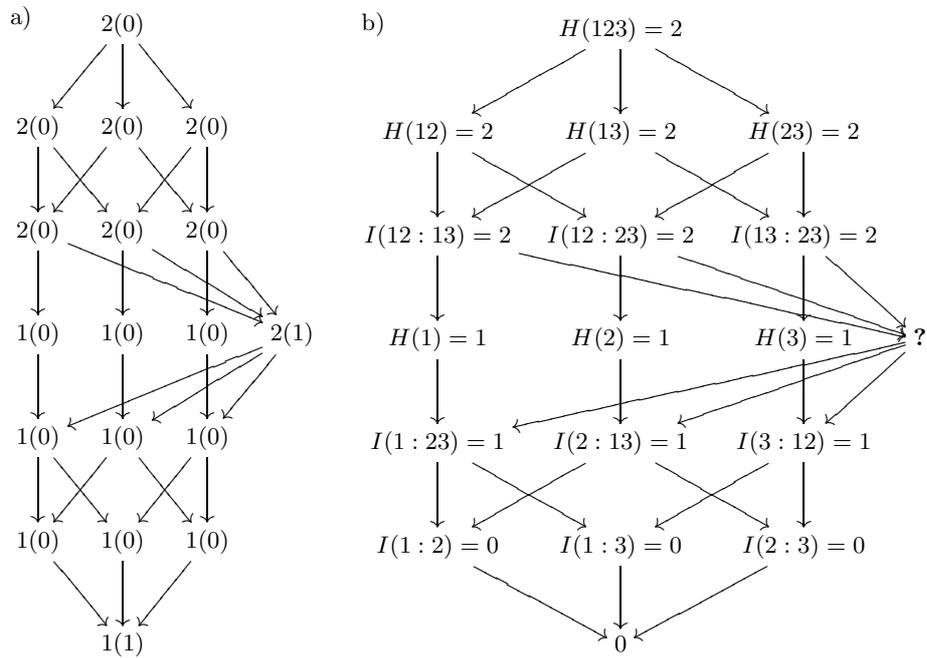
In the following, we discuss an example with three random variables $X_{1}$, $X_{2}$, $X_{3}$: Assume that $X_{1}$ and
$X_{2}$ are independent binary random variables, and let $X_{3}= X_{1}\oplus X_{2}$, where $\oplus$ denotes the sum
modulo 2 or the XOR-function.  Note that this example is symmetric in $X_{1}$, $X_{2}$ and $X_{3}$.
Figure~\ref{fig:xor}a) shows the values of $I_{\min}$ and $I_{\partial}$ in this example for $S=\{X_{1},X_{2},X_{3}\}$;
in other words, we decompose the information that the system has about itself.  What is striking is that the lowest
entry in this lattice does not vanish: According to $I_{\min}$, $X_{1}$, $X_{2}$ and $X_{3}$ share one bit of
information, although they are pair-wise independent. This fact that independent
variables may share information according to $I_{\min}$ has also been observed
and criticized in~\cite{HarderSalgePolani12:Bivariate_Redundant_Information}.  We will later
give an argument from game theory that explains how independent variables can
share information.  Nevertheless, in our opinion one bit of shared information
is too much in this situation: The absolute value of one bit of shared
information needs to be compared to the fact that each of $X_{1},X_{2},X_{3}$ does
not carry more than one bit of information.  Note that in the XOR-example
$I_{\min}=I_{I}$.

A close analysis of this also reveals that strong positivity is incompatible
with the PI lattice:
\begin{theorem}
  \label{thm:incompatibility}
  There is no measure of shared information that satisfies
  \textup{\textbf{(S$_{1}$)}}, \textup{\textbf{(M)}}, \textup{\textbf{(I)}}
  and~\textup{\textbf{(LP)}}.
\end{theorem}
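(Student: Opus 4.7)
The plan is to derive a contradiction by evaluating the PI lattice in the XOR example ($X_1,X_2$ independent fair bits, $X_3=X_1\oplus X_2$) with $S=\{X_1,X_2,X_3\}$, exploiting the tension between the forced large value of a low node (via strong symmetry) and the small value required at a node just above it (via monotonicity).

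First, I would use \textbf{(S$_1$)} together with \textbf{(I)} to pin down $I_\cap$ at every node of the 3-variable PI lattice except one. Strong symmetry applied to the two-argument case gives $I_\cap(S:\mathbf{A}_1;\mathbf{A}_2)=I(\mathbf{A}_1:\mathbf{A}_2)$, and together with \textbf{(M)} it gives $I_\cap(S:\{X_1\};\{X_2\};\{X_3\})\le I(X_i:X_j)=0$. All the pairwise-mutual-information and entropy values in the XOR example are elementary: $H(X_i)=1$, $H(X_iX_j)=2$, $I(X_i:X_j)=0$, $I(X_i:X_jX_k)=1$, $I(X_iX_j:X_iX_k)=2$. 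This gives every value in Figure~\ref{fig:3-vars}b), with a single unknown $\alpha:=I_\cap(S:\{X_1,X_2\};\{X_1,X_3\};\{X_2,X_3\})$.

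Second, I would apply Möbius inversion bottom-up to compute $I_\partial$ at each node in terms of $\alpha$. The steps are routine: $I_\partial(1|2|3)=0$; $I_\partial(i|j)=0$ for each pair; $I_\partial(i|jk)=1$ for each of the three ``singleton vs.\ pair'' antichains (because $I_\cap=1$ and the only strictly smaller nodes are the three zero-contributions); and $I_\partial(\{X_i\})=0$ (since $I_\cap(\{X_i\})=1$ is already filled by $I_\partial(i|jk)$ below it). The crucial node is $\beta=\{X_1,X_2\}\{X_1,X_3\}\{X_2,X_3\}$: its down-set consists exactly of the three $\{X_i\}\{X_j,X_k\}$ antichains, the three singleton-pair antichains, and $\{X_1\}\{X_2\}\{X_3\}$—the singletons $\{X_i\}$ themselves are \emph{not} below $\beta$ because $\{X_i\}\not\subseteq\{X_j,X_k\}$. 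Summing, $I_\partial(\beta)=\alpha-3$.

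Third, I would read off the contradiction. By \textbf{(M)}, since $\beta\le\{X_1,X_2\}\{X_1,X_3\}$ in the PI lattice, we have $\alpha\le I_\cap(S:\{X_1,X_2\};\{X_1,X_3\})=I(X_1X_2:X_1X_3)=2$. But \textbf{(LP)} at the node $\beta$ requires $\alpha\ge 3$. These bounds are incompatible, proving the theorem. The main obstacle, and the only step that requires care rather than calculation, is correctly identifying the down-set of $\beta$—in particular verifying that the singleton antichains $\{X_i\}$ are incomparable with $\beta$ even though they sit at the same horizontal level in the Hasse diagram, while the mixed antichains $\{X_i\}\{X_j,X_k\}$ genuinely lie below it and therefore contribute the crucial value $3$ to the Möbius sum.
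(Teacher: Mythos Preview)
Your proof is correct and follows the same route as the paper: evaluate the PI lattice in the XOR example with $S=\{X_1,X_2,X_3\}$, use \textbf{(S$_1$)}, \textbf{(M)}, \textbf{(I)} to force every value except $\alpha=I_\cap(S:\{X_1,X_2\};\{X_1,X_3\};\{X_2,X_3\})$, and then observe that the M\"obius inversion at that node gives $I_\partial=\alpha-3\le 2-3=-1$, contradicting \textbf{(LP)}. Your write-up is in fact more explicit than the paper's, which summarizes the down-set contribution as ``$\pm 0$'' for the two bottom layers; your careful verification that the singleton antichains $\{X_i\}$ are \emph{not} below $\beta$ (while the mixed antichains $\{X_i\}\{X_j,X_k\}$ are) is exactly the point that makes the argument work.
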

\begin{proof}
  Assume that $I_{\cap}$ is a monotone function on the PI lattice that satisfies
  strong symmetry~\textbf{(S$_{1}$)}.  In the PI lattice for the XOR-example
  we can express all values on the lattice in terms of entropies and mutual
  informations, with one exception, see Figure~\ref{fig:xor}b).  Note that, by
  strong symmetry,
$I_{\cap}(X_{1} X_{2} X_{3}:\mathbf{A}_{1};\dots;\mathbf{A}_{k}) = I_{\cap}(\mathbf{A}_{1};\dots;\mathbf{A}_{k})$
whenever $\mathbf{A}_{1}\cup\dots\cup \mathbf{A}_{k}\subseteq\{X_{1},X_{2},X_{3}\}$.  Comparing with
Figure~\ref{fig:3-vars}b) we see that the information shared by $X_{1}$, $X_{2}$ and $X_{3}$ must vanish by
monotonicity, since the terms on the next layer also vanish, $I(X_{i},X_{j})=0$ for $i\neq j$.  Only the information
  shared by the pairs $\{X_{1},X_{2}\}$, $\{X_{1},X_{3}\}$ and $\{X_{2},X_{3}\}$
  is not determined.  However, we can bound these terms by the monotonicity.
  Similarly, we can compute bounds on $I_{\partial}$.  Namely,
\begin{multline*}
  I_{\partial}(\{X_{1},X_{2}\};\{X_{1},X_{3}\};\{X_{2},X_{3}\})
  = I_{\cap}(\{X_{1},X_{2}\};\{X_{1},X_{3}\};\{X_{2},X_{3}\}) \\
  - I_{\cap}(\{X_{1}\};\{X_{2},X_{3}\}) - I_{\cap}(\{X_{2}\};\{X_{1},X_{3}\}) - I_{\cap}(\{X_{3}\};\{X_{1},X_{2}\}) \pm 0 \\
  \le 2 - 3 = -1,
\end{multline*}
where $\pm 0$ represents a sum of terms belonging to the lowest two layers of
the PI diagram, and these terms all vanish.  This calculation shows that local
positivity is not possible.\qed
\end{proof}

To resolve this problem, one of the properties mentioned in
Theorem~\ref{thm:incompatibility} has to be dropped.
The easiest solution is to drop strong symmetry.  What are the alternatives? We
have to keep self-redundancy~\textbf{(I)} and local positivity~\textbf{(LP)},
since we want to find a decomposition of mutual-information into positive terms.
Therefore, if we want to keep strong symmetry,
we need to replace monotonicity~\textbf{(M)}.  It is probably a good idea to
keep the inequality condition in~\textbf{(M)}, but it is conceivable to replace
the equality condition.  However, one must keep in mind that the equality
condition is essential in justifying the use of the PI lattice: Without this
condition the values of the function $I_{\cap}$ on arbitrary collections of
subsets are not determined by its values on the antichains, and so the PI
lattice is not any more the natural domain of shared information.
Therefore, without the equality condition in~\textbf{(M)} we need to compute many
more terms to completely specify~$I_{\cap}$.  In turn, this means that there are
many more local terms~$I_{\partial}$.
With these additional terms it may be possible to obtain local positivity and
strong symmetry at the same time.
%

Heuristically, what happens in the XOR-example is the following: The term $I(X_{1}: X_{2}X_{3})$ on the third layer in
Figure~\ref{fig:3-vars} (counted from below) is equal to one bit, since we can compute $X_{1}$ from $X_{2}$ and $X_{3}$,
and hence $I(X_{1}: X_{2} X_{3})=H(X_{1})$.  Intuitively, the information shared between $X_{1}$ and $\{X_{2},X_{3}\}$
is precisely the information contained in $X_{1}$.  However, the three terms $I(X_{1}: X_{2} X_{3})$, $I(X_{2}: X_{1}
X_{3})$ and $I(X_{3}: X_{1} X_{2})$ on the third layer are not independent, since $X_{1}$, $X_{2}$ and $X_{3}$ are not
completely independent, but only pairwise independent.  Hence, if we compute the information shared by all three pairs,
we cannot just add up these three bits: We have to subtract (at least) one bit, which we overcounted.  Somehow this one
bit that we overcounted does not have a place in the PI lattice.

If we drop strong symmetry and keep the PI lattice, it is still the question how
to distribute the information over the PI lattice in the XOR-example.  In any
case, monotonicity implies that $I_{\cap}(S: X_{1} X_{2} ; X_{1} X_{3}; X_{2} X_{3})\le
I(S: X_{1} X_{2} X_{3})=2$.  On the other hand, the other three values on the third layer, the three mutual
informations $I(S: X_{i})$, are all equal to one bit.  These values restrict the possible values of $I_{\partial}$,
and it is not easy to motivate a non-negative assignment on intuitive grounds, even for this simple example.

\section{A Geometric Picture of Shared Information}
\label{sec:geometry}

One problem that makes it difficult to define shared information is that there is no known experimental way to extract
shared information.  In this section we want to assume that shared information can be extracted or modelled concretely.
We not only search for a number that measures the amount of shared information, but we want to represent the information
itself.

As a motivation consider the case of two random variables $X,Y$ from the perspective of coding theory.  Suppose that we
want to transmit information about $X$ and $Y$ over some channel.  Then the capacity that we need must exceed the amount of
information that we want to transmit.  To transmit a single variable~$X$, we need a capacity of~$H(X)$.  To be
precise, this statement only becomes true asymptotically: When we want to transmit a string of $n$ values of $n$
independent copies of~$X$, then, for large~$n$, if we have a channel with a capacity of $H(X)$ per time unit $\Delta T$,
then the time needed to transmit $X$ is roughly~$n\Delta T$.  In the same sense, to transmit $X$ and $Y$ together, we
need a channel of capacity~$H(\{X,Y\})$.  Suppose that $X$ was already transmitted, i.e.~both sender and receiver know the
value of~$X$.  As Shannon showed, in this case a channel of capacity $H(Y|X)=H(\{Y,X\})-H(X)$ is sufficient to transmit the
remaining information, such that the receiver knows both $X$ and~$Y$.  Hence, $H(Y|X)$ has the natural interpretation of
unique information of $Y$ with respect to $X$, and as Shannon's theorem shows, the unique information can be isolated
and transmitted separately.  The question is: Which other parts of information can be isolated?

As before, we consider information about a random variable~$S$.  We follow the paradigm that our information or belief
about $S$ can be encoded in a probability distribution~$p(S)$.  Suppose that $X$ is another random variable.  If $S$ is
not independent of~$X$, then a measurement of $X$ gives us further information about~$S$.  For example, if we know that
$X=x$, then our belief about $S$ can be encoded in the conditional probability distribution~$p(S|x)$.
Thus, the information that $X$ carries about $S$ can be encoded in a family $\{p(S|x)\}_{x \in \set{X}}$ of probability
distributions for~$S$. These distributions encode the {\em posterior} beliefs about $S$ conditioned on each outcome
of~$X$.

As motivated by Shannon, information can be quantified by logarithms of probabilities: The information that the state of
the variable $S$ is equal to the specific value $s$ is worth $-\log_{2}(p(S=s))$.  Our uncertainty about $S$, when our
knowledge is encoded in the distribution~$p(S)$, is then equal to the expected information gain when we learn the value
of $S$:
\begin{equation*}
  \sum_{s} p(s)(-\log(p(s))) =: H(S).
\end{equation*}
Similarly, the information that we gain when we learn that $X=x$ is equal to the conditional entropy $H(S|X=x) =
-\sum_{s} p(s|x)\log(p(s|x))$.  The (expected) information that $X$ brings us about $S$ is obtained by averaging
$H(S|X=x)$ and comparing the value with $H(S)$; this agrees with the mutual information:
\begin{multline*}
  \sum_{x}p(x) \sum_{s} p(s|x) \log(p(s|x)) - \sum_{s} p(s)\log(p(s))
  \\
  = \sum_{x}p(x)\sum_{s}p(s|x) \log\left(\frac{p(s|x)}{p(s)}\right) = I(X:S).
\end{multline*}



The situation can be pictured geometrically.  Let $\Pcal_{S}$ be the set of all probability distributions for~$S$.
Geometrically,
\begin{equation*}
\Pcal_{S}=\left\{p: \set{S} \to \Rb : p(s) \ge 0, \sum_{s \in \set{S}}p(s)=1 \right\}
\end{equation*}
is a simplex.  The family $\{p(S|x)\}_{x}$
is a point configuration in $\Pcal_{S}$, indexed by the outcomes $x$ of the random variable $X$.
The information gain is then the mean reduction of uncertainty (in the sense of Shannon information) when replacing the
prior~$p(S)$ with the family $\{p(S|x)\}_{x}$.

According to our geometric interpretation of information, the shared information that $X_{1},\dots,X_{k}$ carry about
$S$ should also be representable as a weighted family of probability distributions for~$S$.  The question is how to
construct this weighted family from the posteriors $\{p(S|x_{i})\}_{x_i}$ and the joint distribution of
$X_{1},\dots,X_{k}$ and~$S$.  Suppose that we have found such a family representing the shared information, and denote
it by $\{p_{x_{1}\sh x_{2}\sh \dots\sh x_{k}}(S)\}_{x_{1},\dots,x_{k}}$.
Then we want to quantify the shared information.  There are two natural possibilities:
\begin{align*}
  SI_{lr}(S:X_{1};\dots;X_{k}) &:= \sum_{x_{1},\dots,x_{k}}\sum_{s} p(s,x_{1},\dots,x_{k})
                 \log\left(\frac{p_{x_{1}\sh x_{2}\sh \dots\sh x_{k}}(s)}{p(s)}\right)
\\
  SI_{KL}(S:X_{1};\dots;X_{k}) &:= \sum_{x_{1},\dots,x_{k}} p(x_{1},\dots,x_{k})
  D(p_{x_{1}\sh x_{2}\sh \dots\sh x_{k}}\|p).  
\end{align*}
The function $SI_{KL}$ has the advantage that it always satisfies global
positivity, regardless of how we construct $p_{x_{1}\sh x_{2}\sh \dots\sh
  x_{k}}$.  By contrast, the function $SI_{lr}$ directly measures the change of
surprise when we replace the prior distribution $p(s)$ with the distribution
$p_{x_{1}\sh x_{2}\sh \dots\sh x_{k}}$.  Depending on how we construct
$p_{x_{1}\sh x_{2}\sh \dots\sh x_{k}}$ the value of $SI_{lr}$ may become
negative.

 We would like to have the following properties:
\begin{enumerate}
\item The construction should be symmetric in $x_{1},\dots,x_{k}$.
\item If $k=1$, then we obtain the posterior: $p_{x_{1}}(S) = p(S|x_{1})$.
\item More variables share less information:

  $D(p_{x_{1}\sh\dots\sh x_{k}}(S)\|p(S)) \le D(p_{x_{1}\sh \dots\sh x_{k-1}}(S)\|p(S))$.
\end{enumerate}
These properties are related to the properties {\bf (S), (I)} and {\bf (M)} as stated above,
but re-formulated to hold point wise for each joint outcome $x_1, \ldots, x_k$.

A natural candidate satisfying the above properties is given by
\begin{equation*}
  p_{x_{1}\sh \dots\sh x_{k}}(S)
  = \argmin \left\{ D(\sum_{i=1}^{k}\lambda_{i}p(S|x_{i})\| p(S)) \;:\; \lambda_{i}>0, \sum_{i}\lambda_{i}=1\right\}.
\end{equation*}
Since the KL divergence is convex, the function $p\mapsto D(p\|p(S))$ has a
unique minimum on any closed convex set.  This shows that the above definition
is well-defined.  Moreover, the definition ensures that $p_{x_{1}\sh \dots\sh
  x_{k}}(S)$ belongs to the convex hull of the posteriors $p(S|x_{i})$ for
$i=1,\dots,k$.  This models the fact that $p_{x_{1}\sh \dots\sh x_{k}}(S)$ only
involves information that is present in these posteriors.  In fact, $p_{x_{1}\sh
  \dots\sh x_{k}}$ is the least informative distribution from this convex set.



The construction of $p_{x_{1}\sh \dots\sh x_{k}}(S)$ implies the following property, which gives an idea in which sense
$p_{x_{1}\sh \dots\sh x_{k}}(S)$ summarizes information shared among all the posteriors $p(S|x_{i})$:
\begin{lemma}
  If all $p(S|x_{i})$ satisfy some linear inequality, then $p_{x_{1}\sh \dots\sh x_{k}}(s_{2})$ satisfies the same
  inequality.  In particular:
  \begin{enumerate}
  \item If $p(s_{1}|x_{i}) \le p(s_{2}|x_{i})$ for all $i$, then $p_{x_{1}\sh \dots\sh x_{k}}(s_{1}) \le p_{x_{1}\sh
      \dots\sh x_{k}}(s_{2})$.
  \item If $p(s|x_{i}) = 0$ for all $i$, then $p_{x_{1}\sh \dots\sh x_{k}}(s) = 0$.
  \end{enumerate}
\end{lemma}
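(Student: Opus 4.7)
The strategy is to reduce everything to the single observation that $p_{x_{1}\sh\dots\sh x_{k}}(S)$ is, by construction, a convex combination of the posteriors $p(S|x_{i})$, and that convex combinations trivially preserve linear inequalities.

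First I would spell out the convex combination property: the argmin in the definition has the form $\sum_{i=1}^{k}\lambda_{i}^{*}p(S|x_{i})$ with nonnegative weights summing to one. The definition in the paper uses $\lambda_{i}>0$, but since $p\mapsto D(p\|p(S))$ is continuous and strictly convex, the unique minimizer extends to the closed convex hull and is attained there; so restricting to the open simplex is a harmless technical artifact. Thus $p_{x_{1}\sh\dots\sh x_{k}}$ genuinely lies in $\operatorname{conv}\{p(S|x_{1}),\dots,p(S|x_{k})\}$.

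Next, suppose each posterior $q_{i}=p(S|x_{i})$ satisfies a linear inequality of the form $\sum_{s}a_{s}q_{i}(s)\le b$ with coefficients $a_{s}\in\Rb$ and constant $b$. Taking the $\lambda_{i}^{*}$-weighted average of these $k$ inequalities and interchanging the finite sums immediately gives $\sum_{s}a_{s}p_{x_{1}\sh\dots\sh x_{k}}(s)\le b$. This is the general claim; the same computation handles strict inequalities and equalities (the latter by combining $\le$ and $\ge$). The two numbered consequences then fall out as special cases: for (1), apply the principle to the inequality $p(s_{1}|\cdot)-p(s_{2}|\cdot)\le 0$; for (2), apply it to both $p(s|\cdot)\le 0$ and $-p(s|\cdot)\le 0$, or equivalently observe directly that a convex combination of distributions vanishing at $s$ still vanishes at $s$.

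There is no real obstacle here. The only step requiring a moment of care is the technical one about open versus closed feasible sets in the definition of the argmin; once that is settled, the lemma is purely a statement about linearity, and the geometric content, that $p_{x_{1}\sh\dots\sh x_{k}}$ inherits every half-space (and hence every polyhedral property) common to the posteriors, is essentially immediate.
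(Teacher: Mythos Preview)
Your proposal is correct and matches the paper's own treatment: the paper states the lemma without proof, saying only that ``the construction of $p_{x_{1}\sh \dots\sh x_{k}}(S)$ implies the following property,'' which is precisely the convex-hull argument you spell out. Your brief remark about the open-versus-closed simplex in the argmin is a reasonable technical clarification that the paper does not make explicit.
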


Unfortunately, $SI_{lr}$ violates monotonicity, and with $SI_{KL}$ the synergy
can become negative.  Both facts can be illustrated with the same example:

From~\eqref{equ:coinformation-diff} we find that
\[ CI(S: X_1; X_2) = I(S: X_1|X_2) - I(S:X_1) + SI(S: X_1; X_2) \]
and thus the non-negativity of $CI$ requires that
\begin{equation}
  \label{eq:SI-ge-Ico}
  SI(S: X_1; X_2) \geq I(S: X_1) - I(S: X_1 | X_2) = I_{Co}(S: X_{1}:X_{2})\,.
\end{equation}
Now, if $S$ is a function of~$X_{2}$, then $I(S:X_{1}|X_{2})$ vanishes, and
therefore~\eqref{eq:SI-ge-Ico} implies $SI(S: X_1; X_2) \geq I(S: X_1)$.
Together with~\textbf{(M)} we obtain
\begin{equation}
  \label{eq:SI=MI}
  SI(S: X_1; X_2) = I(S: X_1),\quad\text{ if $S$ is a function of $X_{2}$.}
\end{equation}

Consider the following distribution
\begin{center}
\begin{tabular}{ccc|c}
  $s$ & $x_1$ & $x_2$ & $p(s,x_1,x_2)$ \\ \hline
  0 & 0 & 0 & 2/6 \\
  0 & 1 & 0 & 1/6 \\
  1 & 0 & 1 & 1/6 \\
  1 & 1 & 1 & 2/6
\end{tabular}
\end{center}
The relative location of $p(S)$ and the posteriors of $S$ given one or two of
$X_{1}$ and $X_{2}$ is visualized in Figure~\ref{fig:SIlr-SIKL}.  Under this
distribution $S$ and $X_{1}$ are positively correlated, while $S = X_2$, and
thus $I(S: X_1 | X_2) = 0$.  Consider the case $X_{1}=x_{1}\neq x_{2}=X_{2}$ in
which $X_{1}$ and $X_{2}$ have conflicting posterior about~$S$, i.e.~$p(S|x_2)$
assigns probability one to $S = x_2$, whereas $p(S|x_1)$ assigns a higher
probability to $S = x_1 \neq x_2$.  Thus, $p_{x_{1}\sh x_{2}}(S)$ is equal to
the prior $p(S)$ in this case.  On the other hand, if $X_{1}=X_{2}=x$, then both
posteriors favor $S=x$.  The convex hull of $p(S|x_{1})$ and $p(S|x_{2})$ is an
interval, and the posterior $p(S|x_{1})$ is the closest point to the
prior~$p(S)$.  Therefore, $p_{x_{1}\sh x_{2}}(S) = p(S|x_{1})$.  In total,
\begin{align*}
  I(S: X_{1}) - &SI_{KL}(S: X_{1}; X_{2}) \\
  & = \sum_{x_{1},x_{2}} p(s,x_{1},x_{2})\left(D_{KL}(p(S|x_{1})\|p(S)) - D_{KL}(p_{x_{1}\sh x_{2}}(S)\|p(S))\right)
  \\
  & = \sum_{x_{1}\neq x_{2}} p(s,x_{1},x_{2}) D_{KL}(p(S|x_{1})\|p(S)) > 0,
\end{align*}
and therefore~\eqref{eq:SI=MI} is violated.  One can check that in this case
$SI_{lr}(S: X_{1};X_{2})$ also violates~\eqref{eq:SI=MI}, but in the other
direction.  Therefore, $SI_{lr}$ violates monotonicity.
\begin{figure}
  \centering
  \begin{tikzpicture}[scale=9]
    \path (0,0) coordinate (S0);
    \path (1,0) coordinate (S1);
    \path (1/2,0) coordinate (Prior);
    \path (1/3,0) coordinate (X0);
    \path (2/3,0) coordinate (X1);

    \foreach \i in {0,1} { \fill (S\i) circle (0.3pt); }
    \foreach \i in {0,1} { \fill (X\i) circle (0.3pt); }
    \fill (Prior) circle (0.3pt);
    \draw (S0) -- (S1);
    \node[anchor=south] at (S0) {$p(S|X_2=0)$};
    \node[anchor=north] at (S0) {$\delta_{S=0}$};
    \node[anchor=south] at (S1) {$p(S|X_2=1)$};
    \node[anchor=north] at (S1) {$\delta_{S=1}$};
    \node[anchor=south] at (X0) {$p(S|X_1=0)$};
    \node[anchor=south] at (X1) {$p(S|X_1=1)$};
    \node[anchor=south] at (Prior) {$p(S)$};

    \begin{scope}[opacity=0.5]
      \draw[cap=round,green,line width=3pt] (S0)--(X0); \draw[cap=round,red,line
      width=3pt] (X0)--(S1);
    \end{scope}
  \end{tikzpicture}

  \caption{The construction of $p_{x_1\sh x_2}$ for the example to $SI_{KL}$
    and~$SI_{lr}$.  The set of probability distributions of the binary variable
    $S$ is the interval between the two point measures $\delta_{S=0}$ and
    $\delta_{S=1}$.  The convex hull of $p(S|X_{1}=0)$ and $p(S|X_{2}=0)$ is marked in
    green.  The closest point to the prior is $p(S|X_{1}=0)$.  The convex hull
    of $p(S|X_{1}=0)$ and $p(S|X_{2}=1)$ is marked in red; it contains the prior.}
  \label{fig:SIlr-SIKL}
\end{figure}
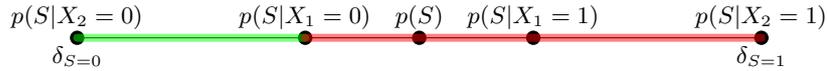

The geometric strategy pursued in this section 
can be compare with the strategy by Williams and Beer in~\cite{WB} that leads to
the definition of~$I_{\min}$.  The formula
\begin{multline*}
  I_{\min}(S:\mathbf{A}_{1};\dots;\mathbf{A}_{k})
  = \sum_{s}p(s) \min_{i} \sum_{a_i}p(a_i|s)\log\frac{p(a_i|s)}{p(a_i)}
  \\
  = \sum_{s}p(s) \min_{i} D(p(\mathbf{A_{i}}|S)\|p(\mathbf{A_{i}}))
\end{multline*}
defining $I_{\min}(S;\mathbf{A_{1}};\dots;\mathbf{A_{n}})$ is similar to the
defining equation of $SI_{KL}$, but involves the conditional distributions
$p(a_{i}|s)$ of the input given the output~$S$.  In our opinion it is much more
natural to work with distributions over the output variable~$S$, since, after
all, we are interested in information about~$S$.  Of course, the defining
equation of $I_{\min}$ can be rewritten in the form
\begin{equation*}
  I_{\min}(S:\mathbf{A}_{1};\dots;\mathbf{A}_{k})
  = \sum_{s}p(s) \min_{i} \sum_{a_i}p(a_i|s)\log\frac{p(s|a_i)}{p(s)},
\end{equation*}
which resembles the definition of~$SI_{lr}$, but involves minimizing over the inputs.

The proposed definition of the posteriors $p_{x_{1}\sh \dots\sh x_{k}}(s)$
involves similar ideas as the definition of shared information 
in~\cite{HarderSalgePolani12:Bivariate_Redundant_Information}.  We only sketch
these connections and refer to the
manuscript~\cite{HarderSalgePolani12:Bivariate_Redundant_Information} for the
precise definitions.  To distinguish their function from other functions we call
it~$I_{HSP}$.  The definition of $I_{HSP}(S : X_{1}; X_{2})$ involves approximating
the posteriors $p(s|x_{1})$ by the convex hull family of posteriors $p(s|x_{2})$
for all possible values $x_{2}$ of~$X_{2}$.  However, as defined
in~\cite{HarderSalgePolani12:Bivariate_Redundant_Information} this
approximation, denoted by $p_{(x_{1}\searrow X_{2})}(s)$, is not unique.  Then
\begin{multline*}
  I_{HSP}(S:X_{1};X_{2})
  = \min\bigg\{
    \sum_{s,x_{1}} p(s,x_{1}) \log\frac{p_{(x_{1}\searrow X_{2})}(s)}{p(s)},
    \\[-2mm]
    \sum_{s,x_{2}} p(s,x_{2}) \log\frac{p_{(x_{2}\searrow X_{1})}(s)}{p(s)}
  \bigg\}.
\end{multline*}

Note that in both definitions of $p_{(x_{1}\searrow X_{2})}(s)$ and $p_{x_{1}\sh
  \dots\sh x_{k}}(s)$ the notion of the convex hull is used as a means to
describe the set of distributions that involve information contained in a set of
posterior distributions.  The difference between both approaches is
that~\cite{HarderSalgePolani12:Bivariate_Redundant_Information} do not try to
extract and represent the joint information pointwise, but they try to model the information
contained in $X_{1}$ using the posterior distributions of~$X_{2}$.  This breaks
the symmetry, and therefore, in the end, one has to take a minimum. Furthermore,
this definition is only meaningful in the case of two random variables and violates
the left monotonicity (see Section~\ref{sec:Imin}).

\section{Game Theoretic Intuitions}
\label{sec:game-theory}

Without an operational definition it is hard to decide which of the above
properties and geometric structures are best suited to capture the concept of
shared information.  In order to get a better idea of what is actually meant
when talking about shared information, we highlight some aspects from the
perspective of game theory.

Scientists in both game theory~\cite{Aumann76} and computer
science~\cite{Halpern95} have studied how knowledge is distributed among a group
of agents.  Since knowledge can be regarded as certain information, results from
these disciplines can provide additional insights into shared information.  The
basic formalism of epistemic agents considers a set $\set{S}$ of possible states
of the world or situations. The knowledge of an agent $i$ is represented as a
partition $X_i$ on $\set{S}$.  Such a partition can be considered as a function $X_i:
\set{S} \to \set{X}_i$ mapping states of the world to possible observations
$\set{X}_i$ that are available to the agent\footnote{Note the similarity to the definition of a random variable as a measurable map
from a probability space to outcomes.  In fact, if we choose an arbitrary
probability distribution on~$\set{S}$, then the partition $X_{i}$, considered as
a function $\set{S}\to\set{X}_{i}$, becomes a random variable.}. Thus, each agent $i$ might not be able to observe the actual
state $s$ of the world, but given an observation $x_i$ he considers all
situations in $X_i^{-1}(x_i) = \{ s \in \set{S} \,|\, X_i(s) = x_i \}$ to be
possible.

Suppose that agent $i$ observes $x_i \in \set{X}_i$. Then $i$ is said to know an event, corresponding to a subset $E
\subset \set{S}$, if the event occurs in all situations that the agent holds possible
given $x_i$, i.e.
\begin{equation} 
  X_i^{-1}(x_i) \subseteq E. \nonumber
\end{equation}
This gives rise to the knowledge operators $K_i: 2^{\set{S}} \rightarrow
2^{\set{S}}$ taking an event $E$ to all situations where agent $i$ knows
this event:
\begin{equation}
  K_i(E) = \{ s \in S \;|\; \mbox{agent $i$ knows $E$ given the observation $X_i(s)$} \}.
  \label{equ:Ki}
\end{equation}
$K_{i}(E)$ can itself be considered as an event.
Using this operator~$K_{i}$, we can compute the situations where an event $E$ is
shared knowledge between agents $1, \ldots, n$, i.e.~where every agent
knows~$E$:
\[ SK(E) = \bigcap_{i=1}^n K_i(E) \]

Note that this does not imply that every agent knows that every agents
know $E$.  The much stronger requirement that everyone knows $E$, and
everyone knows that everyone knows this, and so on, is formalized by
iterating the above construction and referred to as \emph{common
  knowledge}:
\begin{equation*}
  CK(E) = \bigcap_{k=1}^{\infty}SK^{k}(E),\text{ where } SK^{k}(E) = (SK(\cdots SK(E)\cdots)) \text{ ($k$ iterations)}.
\end{equation*}


As an example consider the case of three binary random variables $X_1,X_2$ and~$S$,
where $X_1$ and $X_2$ are independent and $S$ consists of a copy of both of them.
Then, the set of possible situations, i.e.~the support of the joint distribution
$p(x_1,x_2,s)$, consists of four possible states:
\begin{center}
\begin{tabular}{ccc}
$X_1$ & $X_2$ & $S$ \\ \hline
0 & 0 & 00 \\
0 & 1 & 01 \\
1 & 0 & 10 \\
1 & 1 & 11
\end{tabular}
\end{center}
The information partitions correspond to the projections on the
respective components of the joint state, e.g.
\begin{align*}
  X_1^{-1}(0) &= \{ (0,0,00), (0,1,01) \},\\
  X_2^{-1}(1) &= \{ (0,1,01), (1,1,11) \}.
\end{align*}
For the event $E = \{ (0,0,00), (0,1,01), (1,0,10) \}$ we find that
\begin{align*}
  K_{1}(E) & = \{ (0,0,00), (0,1,01) \} \\
  K_{2}(E) & = \{ (0,0,00), (1,0,10) \},
\end{align*}
and therefore $SK_{1,2}(E) = \{ (0,0,00) \}$ since both agents $1$ and $2$
can exclude the state $(1,1,11)$ in this case. Thus, we conclude that
there exists non-trivial shared information between $X_1$ and $X_2$,
namely that $S \neq 11$, even though $X_1$ and $X_2$ are independent of each
other and neither of them knows the state of the other.
On the other hand, there is no common knowledge between $X_1$ and $X_2$, since
$SK_{1,2}(SK_{1,2}(E)) = \emptyset$.

Note that $I_{\min}(S : X_1; X_2) = I_{I}(S : X_1; X_2) = 1$ bit in this
example, if we assume that $X_1$ and $X_2$ are independent and uniformly distributed.
If we say that $I_{\min}$ measures the shared information, then
this implies that $X_1$ and $X_2$ have no unique information.
This is surprising, given that $X_1$
and $X_2$ are independent.
Regarding the game theoretic analysis we see that the shared knowledge only
rules out one state. Thus, a reasonable definition of shared information might
give a positive value to $I_{\cap}(S: X_1; X_2)$ even if $I(X_1: X_2) = 0$, but should
certainly stay below 1~bit.
Maybe a value of $\log(4/3)$ would be a good idea, since the number of
possibilities is reduced from four to three.
Note that \textbf{(Id$_{2}$)}, as proposed
in~\cite{HarderSalgePolani12:Bivariate_Redundant_Information}, would require
that $I_{\cap}(S: X_1;X_2) = 0$ whenever $I(X_1:X_2)$ vanishes.


At present, it is not clear how the difference between shared and common
information could be formulated in information theoretic terms.  One may also
ask, whether a desired decomposition of information, should take into account
shared information or rather refer to common information.  It would probably be
easier to use shared information in a decomposition, because otherwise one needs
to decompose the information into terms describing the information that $X_1$
knows that $X_2$ knows, but $X_2$ does not know whether it is known by $X_1$, and so
on.  On the other hand, common knowledge is represented as a partition (see \cite{Aumann76}),
and hence corresponds to a random variable after introducing a probability
measure on~$\set{S}$.  In contrast, shared knowledge cannot be represented as a
partition.  Maybe this explains why it is difficult, and may even be impossible,
to represent shared information as a random variable.

Note that the condition \textbf{(Id$_{2}$)} 
takes into account the mutual information between elements $\mathbf{A_{i}}$ of
the right hand side.  Their relationship is not considered in the definition of
shared knowledge, but only appears in the higher-order terms which are iterated
in the case of common knowledge.  Therefore, the property \textbf{(Id$_{2}$)} is
more natural for common information than for shared information.  The same holds
true for \textbf{(LC)}, since \textbf{(LC)} implies \textbf{(Id$_{2}$)}.

\section{Conclusions}

We have discussed natural and intuitive properties that a measure of shared information should have.  We have shown
that some of these properties contradict each other.  This shows that intuition and
heuristic arguments have to be used with great care when arguing about information.

In particular, we discussed the partial information decomposition and lattice introduced by Williams and Beer.  We have
shown that a positive decomposition according to the PI lattice contradicts another desirable property, called strong
symmetry.  We are unsure whether this is an argument against strong symmetry, or whether the PI lattice has
to be refined, since 
it is difficult to assign plausible values to the PI decomposition for the XOR-example.

Williams and Beer also proposed a concrete measure $I_{\min}$ of shared
information.  We show that in some examples this measure yields unreasonably
large values.  The problem is that $I_{\min}$ does not distinguish whether
different random variables carry \emph{the same} information or just \emph{the
  same amount} of information.  This phenomenon has also been observed by
others.  However, most people focussed on the property that independent
variables may share information about themselves.  We argue, using ideas from
game theory, that this fact in itself does not speak against~$I_{\min}$; but we
agree that the absolute value that $I_{\min}$ assigns to the shared information
is too large.  In our opinion, what is more striking, is that $I_{\min}$ is not
monotone in its left argument: Random variables share less information about
more.


We expect that further progress requires a more precise, operational
idea of what shared information should be.
We believe that our results provide additional
insights, even thought we have mainly revealed pitfalls regarding
the notion of shared information. Thus, despite some recent progress,
the quest for a general decomposition of multi-variate information is
still open.

\subsubsection*{Acknowledgments}

We thank Nihat Ay for stimulating discussions.
This work was supported by the VW Foundation (J.R.)  and has received funding from the European
Community's Seventh Framework Programme (FP7/2007-2013) under grant agreement no. 258749 (to EO). 

\bibliographystyle{splncs}
\bibliography{ECCS,NIPS}

\end{document}